\newcommand{\C}{\mathrm{C}}
\newcommand{\PP}{\mathrm{P}}
\newcommand{\BP}{\mathrm{BPP}}
\newcommand{\ES}{\text{EXPSPACE}}
\newcommand{\cnd}{\mskip 1mu|\mskip 1mu}
\theoremstyle{plain}
\newtheorem{theorem}{Theorem}
\newtheorem{lemma}[theorem]{Lemma}
\newtheorem{fact}[theorem]{Fact}
\title{On the computational power of $\C$-random strings}
\author{Alexey Milovanov\footnote{email:\url{almas239@gmail.com}}} 
\affil{LASIGE, Faculdade de Ciências, Universidade de Lisboa}
\date{January 2025}
\begin{document}

\maketitle
\begin{abstract}

Denote by $H$ the halting problem. Let $R_U := \{ x \cnd \C_U(x) \ge |x| \}$, where $\C_U(x)$ represents the plain Kolmogorov complexity of $x$ under a universal decompressor $U$. We demonstrate the existence of a universal $U$ such that $H$ is solvable in polynomial time with access to the oracle $R_U$. This result resolves a problem posed by Eric Allender in~\cite{ea} regarding the computational power of Kolmogorov complexity-based oracles.

\end{abstract}
\section{Introduction}
At an informal level, the Kolmogorov complexity of string 
$x$ is defined as the minimal length of a program that outputs 
$x$ on the  empty input. This definition requires clarification: there are various types of complexity (plain, prefix, and others), but even when fixed, there remains a dependency on the choice of programming language or decompressor. We will give all the definitions needed for this paper in the next section; we also refer the reader to~\cite{LiVit,suv} for formal definitions and the main properties of Kolmogorov complexity.

The concept of Kolmogorov complexity allows us to define the notion of an individual random string. This can be done as follows: string $x$ is called random if its Kolmogorov complexity $\gtrsim |x|$, where $|x|$ is the length of $x$. Of course, the formal definition again requires clarifications: in addition to the aforementioned nuances with the definition of complexity, it is necessary to formally define what $\gtrsim$  means. Let us assume that we have clarified all the details. Define by $R$ the set of all random strings. The question arises: How powerful is the oracle $R$? For example, what languages belong to $\textbf{P}^R$?

This and other similar questions were investigated in~\cite{abk, abkmr, sh}.

In these papers some lower bounds for $\textbf{P}^R, \textbf{P/poly}^R$ and other complexity classes were proven. These results are robust in the following sense: the results are valid for all reasonable definitions of $R$, in particular, it does not matter what type of Kolmogorov complexity we consider. 

The situation is different for upper bounds. The results in~\cite{afg, crelm, m} show some limits on the computational power of $R$ only for \emph{prefix complexity}. 

For example, the results in~\cite{afg, crelm} together with the lower bound obtained in~\cite{sh} show that 
$$ \text{EXP}^{\text{NP}} \subseteq \bigcap_U \PP^{ R_{\text{K}_U}}
\subseteq \ES.$$
 
Here, the intersection is for all universal prefix-free decompressors $U$.

The lower and upper bounds are significantly different for the prefix complexity but at least we understand that the intersection does not contain undecidable languages. For the plain complexity even this is unknown. 
Let us introduce some notations.

Denote by $H$ the halting problem: 
$$H:= \{ x \cnd \text{ program } x \text{ halts on the empty input}\}.$$

Denote
$R_U: = \{ x \cnd  \C_U(x) \ge |x|\}$, where $\C_U(x)$ is the plain Kolmogorov complexity of $x$ under a universal decompressor $U$.

{\bf Open question \cite{ea}}

Is $H \notin \PP^{R_U}$ for some universal $U$?
\vspace{0.5em}

There is some evidence that the answer to this question should be affirmative. Kummer\cite{kum} showed that for each universal machine $U$, there is a time-bounded disjunctive truth-table reduction from $H$ to $R_U$: That is, there is a computable function that takes $x$ as input, and produces a list of strings,
with the property that $x \in H$ if and only if at least one of the strings is in $R_U$. However, it
was shown in \cite{abk} that, no matter what computable time bound t one picks, there is some
$U$ such that the disjunctive truth-table reduction from $H$ to $R_U$
requires more time than $t$.
It should be noted that the analogue of Kummer's theorem does not hold for prefix complexity~\cite{mp}.

Let us consider another question. How powerful can  $R_U$ be for \emph{some} universal $U$? It was shown in~\cite{abk} that there exists a universal $U$ such that $H$ can be reduced to $R_U$ by disjunctive truth-table in double exponential time.

Our main result is the following
\begin{theorem}
\label{main}
There exists a universal decompressor $U$ such that $H \in \PP^{R_U}$.
\end{theorem}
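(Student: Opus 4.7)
Plan. I will construct a universal decompressor $U$ together with a polynomial-time oracle algorithm $M$ such that $M^{R_U}$ decides $H$. The central idea is to encode each instance $x$ of $H$ as the $R_U$-membership of a single polynomial-time-computable test string $y_x$, arranged so that $y_x \notin R_U$ iff $x \in H$.

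Construction of $U$. Fix a polynomial-time injective function $g$ with $|g(x)| = |x| + c$ for a constant $c \ge 2$, and set $y_x := g(x)$. Define $U$ in two branches: $U(1 \cdot r) := U^*(r)$ for a fixed universal decompressor $U^*$ (guaranteeing universality of $U$ via $\C_U \le \C_{U^*} + 1$), and $U(0 \cdot x)$ simulates $x$ on the empty input and outputs $y_x$ if $x$ halts (diverges otherwise). This makes $U$ a partial computable function. The reduction $M$, given $x$, computes $y_x$ in polynomial time, queries $R_U(y_x)$, and outputs ``$x \in H$'' iff $y_x \notin R_U$.

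Analysis. For $x \in H$ the program $0x$ of length $|x|+1 < |y_x|$ outputs $y_x$, so $y_x \notin R_U$ and $M$ answers correctly. For $x \notin H$, the injectivity of $g$ together with the fact that $U(0 \cdot x)$ diverges means no $0$-prefix program outputs $y_x$, so $\C_U(y_x) = 1 + \C_{U^*}(y_x)$. For correctness we therefore need $\C_{U^*}(y_x) \ge |y_x| - 1 = |x| + c - 1$ for every $x \notin H$.

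Main obstacle. The inequality $\C_{U^*}(g(x)) \ge |x| + c - 1$ is the crux. Since $g$ is polynomial-time computable, $\C_{U^*}(g(x)) \le \C_{U^*}(x) + O(1)$, which falls far below $|x|$ for low-complexity inputs such as $x = 0^n$. To close this gap, I would either (a) replace the single test string by a polynomial-size family $g_1(x), \ldots, g_p(x)$ and encode $x \in H$ by simultaneous compression $U(0 \cdot \langle j, x \rangle) = g_j(x)$, with the reduction answering ``$x \notin H$'' iff some $g_j(x) \in R_U$, arguing via a counting/covering lemma that for every $x \notin H$ at least one candidate $g_j(x)$ must remain $U^*$-incompressible; or (b) tailor the construction of $U^*$ itself using advance knowledge of $H$ (permitted since we are designing $U$), stretching the simulation prefixes assigned to the partial computable functions that would otherwise compress the $g(x)$ for $x \notin H$. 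Verifying that the customized $U^*$ remains universal (so the prefix-length budget is respected) while the polynomial-time decoding by $M$ remains consistent will be the most delicate bookkeeping step of the proof.
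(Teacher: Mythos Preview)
Your diagnosis of the obstacle is exactly right, but neither of your proposed fixes closes it.

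For fix (a): every $g_j$ you can compute in polynomial time satisfies $\C_{U^*}(g_j(x)) \le \C_{U^*}(x) + O(\log j) + O(1)$. So for a compressible input such as $x = 0^n$, \emph{every} $g_j(x)$ has complexity $O(\log n)$, far below $|g_j(x)|$. No counting or covering argument can produce even one $j$ with $g_j(x)\in R_U$; there simply are none. The disjunctive test ``some $g_j(x)\in R_U$'' will answer ``$x\in H$'' on every compressible $x$, halting or not.

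For fix (b): ``using advance knowledge of $H$'' is not an option in the way you suggest. You are free to choose \emph{which} partial computable function $U^*$ is, but $U^*$ must still be a Turing machine; it cannot consult $H$. The only $H$-dependence available is exactly what your branch $U(0\cdot x)$ already exploits, namely that a simulation may diverge. That mechanism can \emph{shorten} descriptions of $y_x$ when $x$ halts, but it cannot \emph{lengthen} the $U^*$-descriptions of $g(x)$ when $x$ does not halt: the short description of $g(x)$ via $x$ and the code for $g$ is already there in any universal $U^*$.

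The missing idea, which the paper supplies, is to stop trying to make a \emph{deterministic} test string random and instead inject randomness externally. One picks a random matrix $A$ and tests the string $A\cdot Ay$ (with $y$ a padding of $x$). A hashing lemma shows that for a suitable $k$ the string $A\cdot Ay$ is $V$-incompressible with high probability over $A$, \emph{regardless} of how compressible $x$ is. The decompressor $U$ is then rigged so that appending $0^9$ to $A\cdot Ay$ creates a short description of the third type precisely when $x\in H$; a further case analysis (Lemma~\ref{rnr}) shows that when $x\notin H$, such a drop can occur only through a hash collision, which is rare. The algorithm is therefore a $\BP$ algorithm with oracle $R_U$, and one finishes by invoking $\PP^{R_U}=\BP^{R_U}$. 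The two-query pattern ``$A\cdot Ay\in R_U$ but $A\cdot Ay\cdot 0^9\notin R_U$'' is what replaces your single query, and the random $A$ is what replaces your attempt to make $g(x)$ itself incompressible.
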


At the simplest level, the idea of the proof can be explained as follows. 

The universal decompressor is  optimal up to an additive constant, which allows us to construct  certain exotic decompressors with specific properties. For example, there exists a decompressor $U_{\text{even}}$ such that for every $x$ the Kolmogorov complexity $\C_{U_{\text{even}}} (x)$ is always an even number (Lemma~\ref{ev}). We will also construct some exotic decompressor that allows to solve the halting problem.

%We will use that for any universal decompressor $U$ the classes $\BP^{R_U}$ and $\PP^{R_U}$ coincide~\cite{abkmr}.

To explain the idea, we will first prove a simpler result: there exists a universal decompressor $U$ such that $H$ can be solved by the \emph{ oracle function}  $F_U$; that is, on input $x$ the function $F_U$ outputs its Kolmogorov complexity $\C_U(x)$. 

Let $U_{\text{even}}$ be the decompressor mentioned above, where every minimal description has even length.

%The construction of $U$  is the following.
%Denote by $U_{\text{even}}$ the mentioned above decompressor such that every minimal description has even length. 

The decompressor $U$ is defined as follows for arbitrary $d \in \{0, 1\}^{*}$:

\begin{itemize}
    \item if $d = 00d'$ then $U(d):= U_{\text{even}}(d')$.
    \item if $d =1d'$ then $U(d):= U_{\text{even}}(d') = x$ if $x \in H$. 
\end{itemize}
Note that $U$ is computable because $H$ is enumerable. 
We claim that $U$ is universal and $x \in H$ iff  $\C_U(x)$ is odd. Indeed, since $U_{\text{even}}$ is universal then $U$ is also universal (we need only descriptions that starts with $00$). 
If $x \in H$, then the minimal description of $x$ under $U$ starts with $1$ and hence has odd length. Conversely, if $\C_U(x)$ is odd, then $x \in H$.

%If $x \in H$ then the minimal description of  $x$ is  odd  (and  starts with $1$).
%Therefore,  $x \in H$ iff  $\C_U(x)$ is odd.

We, however, need to solve $H$ using a weaker oracle that only allows to 
 distinguish random strings from 
 not-random. The challenge arises when the string $x$ is non-random, as the oracle $R_U$ distinguishes only between random and non-random strings.
 Now we describe some ideas that allow to get around the problem. 
 
%One of the tools  is the following fact: for any universal decompressor $U$ the classes $\BP^{R_U}$ and $\PP^{R_U}$ coincide~\cite{abkmr}.

One useful tool is the result from~\cite{abkmr} stating that, for any universal decompressor $U$, the complexity classes $\BP^{R_U}$ and $\PP^{R_U}$ are equal.

%Another tool is  Lemma~\ref{hash} that we state informally here.  Let $x$ be a string of length $n$ and complexity $k$, let $A$ be a random binary matrix $k \cdot n$. Then with high probability string $A \cdot Ax$ is random (here ``$\cdot$'' means concatenation), i.e. its length is close to its complexity.

Another tool is Lemma~\ref{hash}, which we state informally here: Let $x$ be a string of length $n$ with complexity $k$, and let $A$ be a random binary matrix of dimensions $k \times n$. Then, with high probability, the string $A \circ Ax$ (where ``$\circ$'' denotes concatenation) is algorithmically random, i.e., its length is close to its Kolmogorov complexity.

%This observation enables the construction of a universal decompressor $U$ such that $H \in \PP^{R_U}$. Instead of analyzing the string $x$ directly, we consider the string $A \cdot Ax$, which serves as a ``fingerprint'' of $x$. Note that random binary matrices $A$ act as a family of $2$-universal hash functions, and, by Lemma~\ref{hash}, the string $A \cdot Ax$ is close to random if $x \in H$.

This observation allows to construct a universal decompressor $U$ such that $H \in \PP^{R_U}$. We use a similar approach as for oracle function but instead of string $x$   we directly look at  $A \circ Ax$. This string  can be considered as a ``finger print'' of $x$ (note that matrices are the family of $2$-universal hash-functions) and also this string is close to random by the previously-mentioned properties of  Lemma~\ref{hash}. We construct $U$ in such a way that $A \circ Ax$ is random iff $x \in H$.    

The remainder of this paper is organized as follows. In the next section we recall basic definitions and properties of Kolmogorov complexity. 

Then in Section~\ref{sec_prep} we prove some technical lemmas that are needed to define the universal decompressor $U$ for Theorem~\ref{main} (Section~\ref{sec_un}). Then we describe a polynomial-time algorithm that solves $H$ by using oracle $R_U$ (Section~\ref{sec_alg}) and prove its correctness (Section~\ref{sec_cor}).

\section{Definitions and some properties of Kolmogorov complexity}
Let us recall the definition of plain conditional and unconditional Kolmogorov complexity.

Let $U$ be an algorithm whose inputs and outputs are
binary strings. We will call such $U$ \emph{decompressor} and  define the (unconditional) complexity
$\C_U(x)$ of a binary string $x$ with respect to $U$ as follows:
$$ \C_U(x): = \min \{ |y|: U(y) = x \} $$
Any $y$ such that $U(y)=x$ we call a description (or $U$-description) of $x$.

Therefore, the complexity
of $x$ is defined as the length of the shortest (or minimal) description  of $x$. 

A decompressor $U$ is called \emph{universal} if for every decompressor $U'$ there exists a constant $M$ such that for every string $x$ it holds that
$$ \C_U(x) \le \C_{U'}(x) + M.$$

Now we recall the definition of the \emph{conditional} Kolmogorov complexity. 
Let $D(p, x)$ be a computable partial function (that we will also call decompressor) of two string arguments; its values are also binary strings. We may think
of $U$ as an interpreter of some programming language. The first argument $p$ is considered as a
program and the second argument is an input for this program. Then we define the complexity function
$$\C_D (x \cnd y):=  \min \{|p|: D(p, y) = x\};$$
here $|p|$ stands for the length of a binary string $p$, so the right hand side is the minimal length of
a program that produces output $x$ given input $y$.

\begin{theorem}[Kolmogorov-Solomonoff]
 There exists a universal decompressor $D$ such
that for every other decompressor $D'$
there exists a constant
$M$ such that
$$\C_D(x \cnd y) \le \C_{D'} (x \cnd y) +M$$
for all strings $x$ and $y$.    
\end{theorem}
The proofs for all statements of this section can be found in~\cite{suv}.

Every decompressor $U$ for the conditional  complexity defines unconditional complexity can be naturally derived from the conditioned version by considering the condition as the empty string: 
$$\C_U(x):= \C_U(x \cnd \text{ the empty string}).$$

It is easy to see that if $U$ is a universal decompressor for the conditional Kolmogorov complexity then the corresponding decompressor for the unconditional Kolmogorov complexity is also universal.  

We will use the following basic properties of Kolmogorov complexity, which hold for any universal decompressor $U$:
\begin{itemize}
    \item \textbf{Bounded by length:} There exists a constant $M'$ such that:
    $$\C_U(x) \le |x| + M'.$$
    \item \textbf{Computable transformations:} For every total computable function $f$, there exists a constant $M'$ such that:
    $$\C_U(f(x)) \le \C_U(x) + M'.$$
    \item \textbf{Counting descriptions:} For every string $y$ and every $k$, the number of strings $x$ satisfying $\C_U(x \cnd y) \le k$ is at most $2^{k+1}$.
\end{itemize}

The Kolmogorov complexity $\C_U(x, y)$ of a pair of strings $x$ and $y$ is defined as follows. Let $x, y \to [x, y]$ be an injective computable function that maps a pair of strings into a single string. Then:
$$\C_U(x, y) := \C_U([x, y]).$$
This definition depends on the choice of the pairing function $[,]$, but only up to an additive $O(1)$ term.
Similarly, the complexity of triples of strings, integer numbers or any finite object can be defined.

For every natural number $M$ its Kolmogorov complexity $\C_U (M)$ is not greater than $\log M + O(1)$ (as the length of its binary representation). 

We will use the following fact about the complexity of pairs.
\begin{theorem}
\label{pair}
    $$ \C_U(x, y) \le \C_U(x) + \C_U(y \cnd x) + 2 \log \C_U(y) + O(1).$$
\end{theorem}

Moreover, it is known that this inequality is actually an equality up to  $O(\log (\C_U(x) + \C_U(y)))$ term which is known as the  \emph{Symmetry of Information}. 

The next two statements are known but not so common. We will provide proofs for these statements for the reader's convenience.

\begin{lemma}
\label{l_cc}
For every string $x$ it holds that 
$$ \C_U(x, \C_U(x)) = \C_U(x) + O(1).$$
\end{lemma}
\begin{proof}
The inequality $$ \C_U(x, \C_U(x)) \ge \C_U(x) + O(1)$$
obviously holds: every description for the pair $(x, \C_U(x))$ provides a description for $x$ that has the same length up to an additive constant.  

On the other hand, the shortest description of $x$ (of length $\C_U(x)$) determines
both $x$ and   $\C_U(x)$.   
\end{proof}
\begin{lemma}
\label{l_cond}
Let $\C_U(y \cnd n) = s$. Then $$\C_U(y \cnd n,s) = s - O(1).$$     \end{lemma}
\begin{proof}
Assume that there is a description $q$ for $y$ given $n,s$ of length $s - M$ for some $M$. We claim  that there is a description for $y$ given $n$ of length $s - M + O(\log M)$.
Indeed, knowing $q$ and $M$ it is possible to find $s$ (as, $s = |q| + M$). So: $$\C_U(y \cnd n) \le 
\C_U (q,M) + O(1).$$ Now we estimate $$C_U (q,M) \le |q| + O(\log M) = s - M + O(\log M).$$

Since     $s - M + O(\log M)$ must be not smaller than $s$ we conclude that $M$ is constant.

\end{proof}

\section{Preliminary lemmas}
\label{sec_prep}
In this section we prove some technical lemmas that will be used in the next sections. We start with the following folklore fact.
\begin{lemma}
\label{ev}
There exists a universal decompressor $V$ for the \emph{conditional} Kolmogorov complexity such that for every string $x$ and $y$, the value $\C_V(x \cnd y)$ is even. 
\end{lemma}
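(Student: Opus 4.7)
The plan is to start from any fixed universal conditional decompressor $U_0$ and modify it into a new decompressor $V$ whose domain consists only of even-length descriptions. Once that is done, every $V$-description (in particular the shortest) automatically has even length, so $\C_V(x\cnd y)$ is even by construction; the only thing left to verify is that the modification preserves universality, i.e., adds at most a constant to the complexity.

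Concretely, I would declare $V(d\cnd y)$ undefined whenever $|d|$ is odd, and on even-length inputs use a short dispatching prefix to delegate to $U_0$: if $d = 1 d'$, set $V(d\cnd y) := U_0(d'\cnd y)$ (so the delegated string $d'$ has odd length $|d|-1$), and if $d = 00 d'$, set $V(d\cnd y) := U_0(d'\cnd y)$ (so $d'$ has even length $|d|-2$). All other even-length descriptions (those starting with $01$, together with the empty string) are left outside the domain. The two defined cases have disjoint prefixes, so $V$ is a well-defined partial computable function of $d$ and $y$.

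To verify universality, I would take for arbitrary $x,y$ a shortest $U_0$-description $p$ of $x$ given $y$, of length $n = \C_{U_0}(x \cnd y)$, and prepend either $1$ or $00$ according to the parity of $n$: if $n$ is odd then $V(1p \cnd y) = x$ with $|1p| = n+1$ even, and if $n$ is even then $V(00p \cnd y) = x$ with $|00p| = n+2$ even. This gives $\C_V(x\cnd y) \le \C_{U_0}(x \cnd y) + 2$, so $V$ inherits universality from $U_0$. Since by construction only even-length strings lie in the domain of $V(\cdot \cnd y)$, the minimum $\C_V(x \cnd y)$ is even for every $x$ and $y$.

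I do not expect any real obstacle: the construction is immediate once one notices the idea. The only point requiring care is that the dispatching overhead must be $O(1)$ and must cover both parities of the original $U_0$-description; the two-case prefix scheme $\{1, 00\}$ does exactly this while keeping the two branches prefix-disjoint.
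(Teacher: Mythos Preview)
Your proof is correct and follows the same idea as the paper: start from a universal conditional decompressor and tweak it so that only even-length descriptions count, with $O(1)$ overhead. The paper's construction is slightly different in mechanics: it lets $V$ agree with $V_{\text{opt}}$ on even-length inputs and, on odd-length inputs, simply drops the last bit (so any odd-length $V$-description is dominated by its even-length prefix). Your version instead leaves odd lengths undefined and uses the prefix dispatch $\{1,00\}$ on even lengths to reach $U_0$-descriptions of both parities. The practical difference is that your construction makes the universality bound $\C_V(x\cnd y)\le \C_{U_0}(x\cnd y)+2$ completely explicit, whereas the paper's version implicitly relies on the (standard but unstated) fact that a universal $V_{\text{opt}}$ already has even-length descriptions within $O(1)$ of the optimum; your argument avoids that appeal.
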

\begin{proof}
Let $V_{\text{opt}}$ be an arbitrary universal decompressor for the conditional    Kolmogorov complexity. Define $V(d, y)$ for arbitrary $d$ and $y$ in the following way:
\begin{itemize}
    \item if $d = 0\circ d'$ for  odd $|d'|$ then $V(d, y):= V_{\text{opt}}(d', y)$;
    \item if $d = 11\circ d'$ for  even $|d'|$ then $V(d, y):= V_{\text{opt}}(d', y)$;
    \item Otherwise
$V(d, y)$ is undefined. 
\end{itemize} 
From this definition it follows that for arbitrary $x$ and $y$ the value $\C_V(x \cnd y)$ is even and it holds that:
$$\C_V(x \cnd y) \le \C_{V_{\text{opt}}}(x \cnd y)+2.$$
Therefore, $V$ is a universal decompressor.
\end{proof}
As in the previous section we define  $\C_V(x)$ for arbitrary $x$  as $\C_V(x \cnd \text{ empty word})$.

This decompressor $V$ will be instrumental in constructing a universal decompressor $U$ that enables solving the halting problem.

We now prove two lemmas and define two constants ($D$ and $G$), which will be used in the next section to define $U$.

We will use the following simple fact about matrices.
\begin{fact}
\label{f}
Let $b_1 \not= b_2$ be two binary strings of length $n$. Let $A$ be a random under the uniform distribution  binary $k \times n$ matrix. Then the probability of the event $[Ab_1 = A b_2 ]$ is equal to $2^{-k}$.     
\end{fact}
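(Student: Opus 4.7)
The plan is to reduce the event $[Ab_1 = Ab_2]$ to the event that a random matrix annihilates a fixed nonzero vector, and then analyze that event row-by-row over $\mathrm{GF}(2)$.

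First I would set $v := b_1 \oplus b_2$. Since $b_1 \ne b_2$, the vector $v \in \{0,1\}^n$ is nonzero, and by linearity over $\mathrm{GF}(2)$ we have $Ab_1 = Ab_2$ if and only if $Av = 0$. So the task reduces to showing that $\Pr[Av=0] = 2^{-k}$ for a uniformly random $k \times n$ binary matrix $A$ and any fixed nonzero $v$.

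Next I would analyze one row at a time. Fix an index $j$ with $v_j = 1$; such a $j$ exists because $v \ne 0$. For any row $a_i$ of $A$, the inner product $\langle a_i, v \rangle = A_{i,j} + \sum_{\ell \ne j} A_{i,\ell} v_\ell \pmod 2$. Conditioning on all entries $A_{i,\ell}$ with $\ell \ne j$, the remaining bit $A_{i,j}$ is uniform and independent, so $\langle a_i, v \rangle$ is uniformly distributed in $\{0,1\}$, and hence equals $0$ with probability exactly $1/2$. Because the rows of $A$ are mutually independent, the events $\{\langle a_i, v\rangle = 0\}$ for $i = 1, \dots, k$ are independent, and therefore $\Pr[Av = 0] = (1/2)^k = 2^{-k}$.

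I do not anticipate a real obstacle: the statement is a standard linear-algebra-over-$\mathrm{GF}(2)$ calculation, and the only mildly nontrivial step is noticing that picking a coordinate $j$ with $v_j = 1$ lets each row's inner product absorb one uniform independent bit, which is what drives the factor $2^{-k}$.
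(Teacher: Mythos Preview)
Your argument is correct and is the standard proof: reduce to $Av=0$ for the nonzero vector $v=b_1\oplus b_2$, then use row independence and the fact that each row's inner product with $v$ is a uniform bit. The paper itself does not give a proof of this fact at all; it is stated as a ``simple fact about matrices'' and left to the reader, so your proposal supplies exactly the routine verification the paper omits.
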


\begin{lemma}
\label{hash}
Let $y$ be a string of length $n$. Denote $ k = \C_V(y \cnd n) + 5$.
Assume that $n+1$ is a prime number and $k <  n$.

 Let $A$ be a random  binary  $k \times n$ matrix (under the uniform distribution).

Let $f$ be the string $Ay$. Then for some constant $D$ the following holds with probability at least $\frac{9}{10}$ : 
    $$\C_V(A  \circ f) \ge nk + k - D.$$ Here $A$  is viewed as a string of length $nk$ .
\end{lemma}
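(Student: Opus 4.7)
The plan is to combine two high-probability events about $A$ with a decoding step. Let $S = \{y' : \C_V(y' \cnd n) \leq k - 5\}$; then $y \in S$ and $|S| \leq 2^{k-4}$. By Fact~\ref{f} with a union bound over $S \setminus \{y\}$, the event that some $y' \in S \setminus \{y\}$ satisfies $A y' = A y$ has probability at most $2^{k-4} \cdot 2^{-k} = 1/16$. Separately, by counting programs of length $< nk - c$, the event $\C_V(A \cnd y^*) < nk - c$ has probability at most $2^{-c}$, where $y^*$ denotes a shortest $V$-program for $y$. Choosing $c = 5$ and calling $A$ \emph{good} when both bad events fail, a union bound yields $\Pr[A \text{ is good}] \geq 1 - 1/16 - 1/32 > 9/10$.

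Next comes a fixed $O(1)$-size decoder that, for good $A$, reconstructs $(A, y)$ from $A \cdot f$. The length of that string is $L = k(n+1)$; because $n+1$ is prime and $n+1 > k$, we have $n+1 > \sqrt{L}$, so $n+1$ is the unique prime factor of $L$ above $\sqrt{L}$ and can be found by trial division. The decoder thereby reads off $n$ and $k$, parses $A$ (first $nk$ bits) and $f$ (last $k$ bits), then dovetails all $V$-programs of length $\leq k - 5$ on input $n$ and outputs the first enumerated $y'$ with $A y' = f$. The uniqueness property in the good event forces this $y'$ to equal $y$, and so $\C_V(A, y) \leq \C_V(A \cdot f) + O(1)$.

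The final step applies the Kolmogorov--Levin symmetry-of-information identity $\C_V(A, y) = \C_V(y) + \C_V(A \cnd y^*) + O(1)$. Using $\C_V(y) \geq \C_V(y \cnd n) - O(1) = k - 5 - O(1)$ (by the trivial program that ignores $n$) together with the good-event bound $\C_V(A \cnd y^*) \geq nk - c$, one obtains $\C_V(A, y) \geq nk + k - c - 5 - O(1)$, and combining with the decoder inequality yields $\C_V(A \cdot f) \geq nk + k - D$ for a constant $D$ depending only on $V$. The main delicate point will be insisting on the $y^*$ form of the identity rather than the analogue with $y$: in plain complexity $\C_V(A \cnd y)$ and $\C_V(A \cnd y^*)$ can differ by $\Theta(\log k)$ bits, which would prevent $D$ from being an absolute constant; using the $y^*$ version is precisely what makes all error terms $O(1)$.
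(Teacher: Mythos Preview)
Your collision bound and your decoding step are essentially the same as the paper's, and they are fine. The gap is in your final step: the identity
\[
\C_V(A,y)=\C_V(y)+\C_V(A\cnd y^*)+O(1)
\]
does \emph{not} hold for plain complexity. The $O(1)$ chain rule in that form is a theorem about \emph{prefix} complexity $K$; for plain $\C$ the Kolmogorov--Levin theorem carries an unavoidable $O(\log \C(A,y))$ term even when one conditions on $y^*$ rather than on $y$ (see, e.g., Shen--Uspensky--Vereshchagin). In your setting $\C_V(A,y)\approx nk$, so the error is $O(\log(nk))$, and the constant you call $D$ would grow with $n$ and $k$. Your closing remark that ``using the $y^*$ version is precisely what makes all error terms $O(1)$'' is thus a misconception: switching from $y$ to $y^*$ buys you $\C_V(y)$ in the condition, but it does not eliminate the logarithmic loss inherent in the plain-complexity chain rule.

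The paper gets around this by never invoking symmetry of information. Instead it proves directly, by a counting argument tailored to the present parameters, that with probability at least $99/100$ one has $\C_V(A\cdot y)\ge nk+k-C_1$ for an absolute constant $C_1$: if this failed for a $1\%$ fraction of matrices, then $y$ would lie in a set of at most $2^{k-R+O(1)}$ strings enumerable from $n,k,R$, giving $\C_V(y\cnd n,k,R)\le k-R+O(1)$; a short sub-lemma (that $\C_V(y\cnd n,\C_V(y\cnd n))=\C_V(y\cnd n)-O(1)$) then forces $R=O(1)$. After that, your own decoding argument (recover $n,k$ by factoring $|A\cdot f|=(n+1)k$, then find $y$ as the unique low-complexity preimage) yields $\C_V(A\cdot y)\le \C_V(A\cdot f)+O(1)$, and combining the two gives the constant $D$. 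If you want to keep your overall structure, replace the symmetry-of-information step by this direct counting bound on $\C_V(A\cdot y)$.
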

\begin{proof}
{\bf First} we {\bf claim} that with probability at least $\frac{15}{16} = 1 - \frac{1}{2^4}$ there is no $t \not= y$ such that $At  = f$ and $\C_V(t \cnd n) \le k - 5$. Indeed, the number of strings $t$ such that $\C_V(t \cnd n) \le k - 5$ is at most $2^{k-4}$. Now we just use  Fact~\ref{f}.

The {\bf second claim} is that for some constant $C_1$  with probability at least $\frac{99}{100}$ it holds that $$\C_V(A  \circ y) \ge nk + k - C_1.$$

Let $\C_V(A  \circ y) \le nk + k - R$ for some $R$ and for $1$ percent of all possible  $A$. We want to show that $R = O(1)$.
Note that $$\C_V(y \cnd n, k, R) \le k - R + O(1).$$ Indeed, the number of strings $h$ of length $n$ such that there are at least  $1$ percent of all matrices $A$ such that $\C_V(A  \circ y) \le nk + k - R$ is at most $2^{k - R + O(1)}$. We can enumerate all such strings given $n, k, R$ and describe $y$ as the corresponding number in this enumeration.  

%Now we will use the following known~\cite{suv} fact.

By  Lemma~\ref{l_cond} $\C_V(y \cnd n, k) = k - O(1)$. By using the same technique as in the proof of Lemma~\ref{l_cond} we conclude that $R$ is constant. 

 Now we prove the statement of the lemma. Let $$\C_V(A  \circ f) = nk + k - D \text{ for some }D.$$ We need to prove that $D = O(1)$ with high probability.
 
For this, we estimate the complexity of tuple $(A  \circ f, n,k)$.

First, note that by Lemma~\ref{l_cc} we have 
$$\C_V(A  \circ f, nk + k - D ) = nk + k - D + O(1).$$
By Theorem~\ref{pair} we can estimate the complexity of the following triple:
$$ \C_V(A  \circ f, nk + k - D, D) \le nk + k - D + O(\log D).$$

 Knowing $(n+1)k - D$ and $D$ it is simple to find $n$ and $k$ since $n+1$ is a large prime. Hence, 
 $$\C_V(A  \circ f, n,k) \le nk + k - D + O(\log D).$$ 
With high probability (at least $ 1 - \frac{1}{16} - \frac{1}{100}> \frac{9}{10}$) both claims above hold,
i.e. $y$ does not have collisions among strings  $t$ such that $\C_V(t \cnd n) \le k-5$  and $$\C_V(A \circ y) \ge nk + k - O(1).$$
Therefore, it is possible to find $y$ by $(A  \circ f, n,k)$ as the unique preimage of $A$ among strings with $\C_V(y \cnd n) \le k-5$.

Hence,  $D$ is constant.

\end{proof}

Now, we state a simple lemma that is an easy consequence of the fact that strings $x$ and $x \circ 0^{9}$ (just appending $9$ zeros to $x$) have the complexity up to some additive constant. 

 We need this lemma to define constant $G$ that we will use in the next section.
\begin{lemma}
\label{const}
For every $D$ there exists $G$ such that the following holds.  

For every $x$ if $\C_V(x) \ge |x| - D$ then $\C_V(x \circ 0^{9}) \ge (|x| + 9) - G$.

\end{lemma}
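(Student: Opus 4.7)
The plan is a short contrapositive argument exploiting the fact that $x$ is trivially recoverable from $x 0^9$ by stripping off the last nine characters. Formally, I introduce the (partial) decompressor $W$ defined so that $W(p)$ equals the string obtained from $V(p)$ by deleting its last nine symbols, whenever $|V(p)| \ge 9$. Any program for $x 0^9$ under $V$ is automatically a program for $x$ under $W$, so $\C_W(x) \le \C_V(x 0^9)$. By universality of $V$ there is a constant $c$, depending only on $V$, with $\C_V(x) \le \C_W(x) + c$, and combining these yields
\[
\C_V(x) \;\le\; \C_V(x 0^9) + c
\]
for every string $x$.

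Given this inequality, I simply set $G := D + 9 + c$ and verify the statement by contrapositive. If $\C_V(x 0^9) < (|x|+9) - G$, then
\[
\C_V(x) \;\le\; \C_V(x 0^9) + c \;<\; (|x|+9) - G + c \;=\; |x| - D,
\]
contradicting the hypothesis $\C_V(x) \ge |x| - D$. Hence whenever $\C_V(x) \ge |x| - D$ we must also have $\C_V(x 0^9) \ge (|x|+9) - G$, as required.

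There is really no obstacle here: the whole argument reduces to the observation that deleting a fixed suffix of fixed length is a single fixed computable operation, and therefore changes plain Kolmogorov complexity by at most an additive constant. The only mild care needed is to make sure that the constant $c$ (hence $G$) depends only on $V$ and on $D$, not on $x$, which is immediate since $W$ itself is defined without reference to $x$. The lemma's role in the paper is simply to name this constant $G$ as a function of $D$ for later use in the construction of the exotic universal decompressor $U$.
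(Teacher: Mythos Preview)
Your proposal is correct. The paper does not actually prove this lemma---it explicitly calls it ``very easy'' and states it only in order to name the constant $G$---and your argument is precisely the standard one: the computable map $z\mapsto z$-with-last-nine-bits-removed witnesses $\C_V(x)\le \C_V(x0^9)+c$, from which the contrapositive with $G=D+9+c$ is immediate.
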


\section{Definition of  Universal Decompressor $U$}
\label{sec_un}
Here we describe  universal decompressor $U$ that allows to solve the halting problem in polynomial time. We will use decompressor $V$ from the previous section.

This decompressor $U$ will have the following properties: 
\begin{itemize}
    \item  every string $A \circ f$ of length $nk + k$ as in Lemma~\ref{hash} such that $\C_V(A \circ f) \ge nk + k - D$ is $U$-random (i.e. $\C_U(A \circ f) \ge |A \circ f| = nk +k$).
\item Lemma~\ref{rnr} (stated below) holds.
\end{itemize}

Odd  numbers $m$ such that $m = p \cdot u$, where $u  < p$ and $p$ is a prime number we  call \emph{specific}. For specific number $m$ the number $p$ is called its \emph{large prime factor}.

We define $U(d)$ for an arbitrary $d \in \{0,1\}^*$ as follows (using constants $D$ and $G$ defined in the previous section):

% Define $U(d)$ for arbitrary $d \in \{0,1\}^*$ in the following way 
 %(in the definition we use constants $D$ and $G$ that were defined in the previous section): 
\begin{enumerate}
    \item If 
$d = 1^{D-1} 0 d'$
 for some $d'$ then $U(d)$ runs $V(d')$. 
 
 Assume that $V(d')$ outputs some string $y$  (if $V(d')$ is undefined, then $U(d)$ is also undefined).

If $y$ does not have a form $y= y' 0^{9}$ where $|y'|$ is specific then $U(d) := V(d')$.

Otherwise, (if $y$ has this form) we compare $|y|$ with $|d'|$.

If $|y| > |d'| - G$ then $U(d) := V(d')$. 
 Otherwise $U(d)$ is undefined.
 
   \item If $d = 1^{D }0 d'$  for some $d'$ then $U(d) := d'$. 
   \item If $d = 0d'$ for some $d'$ and $|d'| + 5$  is specific then $U$ works as follows. Let $p_l$---the $l$th prime number---be the large prime factor of $|d'|+5$. Let $|d'| = p_l \cdot k - 5$. Denote $n = p_l - 1$.
Consider $d'$ as the concatenation  $d' = A  \circ d''$, where $A$ is  the $k \times n$ matrix and 
$ d''$ is a  string of length $k-5$.

Run $V(d'' \cnd n)$. If it halts consider the output $y$.  If $|y| \not=n$ then $U$ is undefined in this case. Otherwise we consider $x$ --- the first $l$ bits of $y$.  If $x {\in} H$  then $U(d') := A \circ  Ay \circ 0^9$.
\end{enumerate}

Note that there are no strings with specific lengths that have $U$-description of the third type: all strings that have descriptions of the third type have even length: $$|A \circ Ay \circ 0^9| = nk + k + 9 = (n+1)k + 9 = p_l \cdot k + 9.$$ Since $k-5$ is the length of some $V$-description we conclude that $k$ is odd. So,  $p_l \cdot k + 9$ is even.

\begin{lemma}
\label{univ}
    Decompressor $U$ is universal. 
\end{lemma}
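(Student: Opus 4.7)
The plan is to prove universality by showing $\C_U(x) \le \C_V(x) + O(1)$ for every $x$; since $V$ is itself universal (by Lemma~\ref{ev} together with the standard machinery), this immediately yields universality of $U$.

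Given any $x$, the natural candidate is to let $d'$ be a shortest $V$-description of $x$, so that $V(d') = x$ and $|d'| = \C_V(x)$, and then form $d = 1^{D-1} 0 d'$. This $d$ has length $|d'| + D$ and falls under rule~1 in the definition of $U$. I would then check that $U(d) = x$, i.e., that rule~1 does not abort. Inspecting rule~1, the only way it can abort on this $d$ is that both (a) $x = V(d')$ has the form $y' 0^9$ with $|y'|$ specific, and (b) $|x| \le |d'| - G$.

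Condition (b) can be ruled out once $G$ is chosen sufficiently large. By universality of $V$ there is a constant $c_V$ such that $\C_V(x) \le |x| + c_V$ for every $x$; hence
$$|d'| - G = \C_V(x) - G \le |x| + c_V - G < |x|$$
as soon as $G > c_V$. So (b) fails, rule~1 succeeds, and we obtain $\C_U(x) \le |d'| + D = \C_V(x) + D$ for every $x$, which is exactly the required bound.

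The main subtlety is a bookkeeping check: I need to be entitled to enlarge $G$ to exceed $c_V$. Lemma~\ref{const} only asserts the existence of \emph{some} $G$ for which $\C_V(x 0^9) \ge (|x|+9) - G$ holds whenever $\C_V(x) \ge |x| - D$; enlarging $G$ only weakens the inequality, so any larger value works just as well. The constant $G$ also appears as the threshold in rule~1 itself, and implicitly in the discussion around rule~3, but in neither place is its exact value used beyond being large enough. Thus we may freely replace $G$ by $\max(G, c_V + 1)$ without disturbing anything established earlier, completing the proof of universality.
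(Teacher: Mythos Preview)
Your argument is correct, but it departs from the paper in one point: the paper's one-line proof invokes \emph{both} rule~1 and rule~2, whereas you use only rule~1 and compensate by enlarging $G$.

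The argument behind the paper's hint is presumably this. If rule~1 does not abort on a shortest $V$-description $d'$ of $x$, then $\C_U(x)\le \C_V(x)+D$, as you say. If rule~1 \emph{does} abort, then by definition $|x|\le |d'|-G$, so $\C_V(x)=|d'|\ge |x|+G\ge |x|$; now rule~2 (the identity $U(1^{D}0\,x)=x$) yields $\C_U(x)\le |x|+D+1\le \C_V(x)+D+1$. Either way $\C_U(x)\le \C_V(x)+O(1)$, and $G$ never has to be adjusted.

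Your route also works, but the bookkeeping claim that enlarging $G$ ``does not disturb anything established earlier'' is not as innocent as you suggest. The relevant later use of $G$ is in the proof of Lemma~\ref{rnr}, where one applies Lemma~\ref{const} to get $\C_V(b\cdot 0^{9})\ge |b|+9-G$ and then argues that $b\cdot 0^{9}$ has no short type-1 $U$-description; for that step a \emph{smaller} $G$ is what helps (both because it strengthens the bound from Lemma~\ref{const} and because it makes rule~1 abort more often on special-form outputs), so the monotonicity does not run in the direction you need. The paper's two-rule approach sidesteps this issue entirely by leaving $G$ untouched.
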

\begin{proof}
To prove this lemma we need only the descriptions of the  first and second type in the definition of $U$.

Indeed, consider some string $y$. Let $d'$ be a minimal $V$-description of $y$. 

If $|y|>|d'| - G$ then string $1^{D-1}0d'$ is an  $U$-description of $y$ and therefore $$\C_U(y) \le \C_V(y) + D.$$
If $|y| \le |d'| - G$ then 
$1^D 0 y$ is an $U$-description 
and therefore 
$$\C_U(y) \le |y| + D +1 \le |d'| - G + D + 1 = \C_U(y)- G + D + 1.$$
So, in both cases 
$$\C_U(y) \le \C_V(y) + D +1.$$
Therefore, $U$ is universal since $V$ is universal. 
\end{proof}
\begin{lemma}
\label{rnr}
Let $b$ be some string such that $|b|$ is specific and $|b| = p_l \cdot k$, where  $p_l$ is a  large prime. Consider $b$ as the concatenation $b= A \circ c$, where $|c| = k$.

 Assume that $\C_U(b) \ge |b|$ but $\C_U(b \circ 0^9) < |b| + 9$.

Then there exists a string $y$ of length $n:=p_l - 1$ such that the first $l$ bits of $y$ describe a program $x$ that belongs to $H$.

Moreover,  $\C_V(y \cnd n) \le k - 5$ and
$Ay = c$.
\end{lemma}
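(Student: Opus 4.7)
The plan is to classify the short $U$-description $d$ of $b \cdot 0^9$ (whose existence is given by the hypothesis $\C_U(b \cdot 0^9) < |b| + 9$) according to which of the three items in the definition of $U$ produced it, and to read off the required $y$ from the fact that $d$ must come from item 3. A description produced by item 2 has length $|b \cdot 0^9| + D + 1 = |b| + 10 + D > |b| + 9$, so item 2 is excluded at once.

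To eliminate item 1, I would first record the auxiliary that $\C_U(b) \ge |b|$ implies $\C_V(b) \ge |b| - D$: a $V$-description $d'$ of $b$ with $|d'| < |b| - D$ would produce, via $d := 1^{D-1} 0 d'$, a $U$-description of $b$ of length $|d'| + D < |b|$, and the item-1 refusal clause cannot disqualify it (since $|b|$ is odd, $b$ itself cannot be written as $y' 0^9$ with $|y'|$ specific), contradicting $\C_U(b) \ge |b|$. Feeding this bound into Lemma~\ref{const} yields $\C_V(b \cdot 0^9) \ge |b| + 9 - G$. Now suppose item 1 produced the short description of $b \cdot 0^9$, i.e.\ $V(d') = b \cdot 0^9$ with $|d'| + D < |b| + 9$. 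Here the output does have the shape $y' 0^9$ with $|y'| = |b|$ specific, so the item-1 refusal clause is in force, and the construction is arranged so that for outputs of this shape no $V$-description short enough to yield a sub-$|y|$ $U$-description can survive. Reconciling the constants $D$ and $G$ with this refusal threshold is the delicate point of the proof.

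With items 1 and 2 excluded, the description must come from item 3, so write $d = 0 d'$ with $|d'| + 5 = k'' p_{l''}$ specific. Matching the output length against $|b \cdot 0^9| = p_l k + 9$ gives $k''(n''+1) + 9 = p_l k + 9$, i.e.\ $k'' p_{l''} = p_l k$; uniqueness of the specific factorisation (the large prime strictly exceeds its cofactor) forces $k'' = k$, $p_{l''} = p_l$, $n'' = n$, and $l'' = l$. Parsing $d' = A'' \cdot d''$ with $|A''| = nk$ and $|d''| = k - 5$, and comparing $A'' \cdot A''y \cdot 0^9$ with $A \cdot c \cdot 0^9$, yields $A'' = A$ and $Ay = c$, where $y := V(d'' \cnd n)$ has length $n$ and its first $l$ bits describe a program $x \in H$. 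The bound $\C_V(y \cnd n) \le |d''| = k - 5$ is immediate, so $y$ is the required string.
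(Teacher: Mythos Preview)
Your proposal is correct and follows the paper's argument essentially line for line: rule out item~2 by length, use item~1 to convert $\C_U(b)\ge|b|$ into $\C_V(b)\ge|b|-D$ (noting the refusal clause cannot apply to $b$ since $|b|$ is odd), apply Lemma~\ref{const} to get $\C_V(b\cdot 0^9)\ge|b|+9-G$, then invoke the refusal clause to eliminate item~1 descriptions of $b\cdot 0^9$, leaving only item~3. Your unpacking of the item-3 case (unique factorisation of specific numbers forcing $k''=k$, $l''=l$, then reading off $A''=A$, $Ay=c$, and $\C_V(y\cnd n)\le k-5$) is more explicit than the paper, which simply asserts that the conclusion follows; and your remark that reconciling $D$ and $G$ with the refusal threshold is the delicate point is exactly where the paper is tersest.
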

\begin{proof}
We claim that
 the inequality $\C_U(b) \ge |b|$ implies that $\C_V (b) \ge |b| - D$. Indeed, otherwise $b$ has an $U$-description of length less then $b$ of the first type. 
 
 (Note that since $|b|$ is specific, then there is no $b'$ such that $b = b' \circ 0^9$ and $|b'|$ is specific because the lengths of $b$ and $b'$ can not be both odd. Hence, the exception about $G$ does not apply to $b$.)

 Then by Lemma~\ref{const} we have that $\C_V(b \circ 0^9) \ge |b| + 9- G$. Hence, there are no $V$-descriptions of the first type for $b \circ 0^9$.

 Also, the minimal $V$-description of $b \circ 0^9$ is not of the second type because this string has a rather short description . Hence, $b \circ 0^9$ has a description of the third type that yields the conclusion of the lemma.
\end{proof}
\section{Algorithm}
\label{sec_alg}
We claim that the following probabilistic polynomial-time algorithm with access to $R_U$ solves the halting problem:

Let $x$ be an input instance for the halting problem. Denote by $l$ the length of $x$.

Denote by $p_l$ the $l$th prime number and set $n:= p_l - 1$.

Finally, denote by $y$ the following string of length $n$: $y: = x \circ 0^{n - l }$.

Then, 
for every  $k = 1, \ldots,  n-1 $   and 
for  $m = n^{50}$ 
we consider $m$ random matrices $A_1, \ldots, A_m$ of size $k \times n$.

If  for some $k$ and for at least half of $i \in \{1, \ldots, m\}$  it holds that
\begin{itemize}
    \item $A_i  \circ A_iy  \in R_U$  but
    \item $A_i \circ A_i y  \circ 0^9 \notin R_U$; 
\end{itemize}
 
 then the algorithm outputs that the program $x$ halts; otherwise, it outputs that $x$ does not halt.

\section{Correctness}
\label{sec_cor}
\begin{proof}[Proof of Theorem~\ref{main} ]
By Lemma~\ref{univ} the decompressor $U$ is universal and as it was mentioned in the Introduction $\PP^{R_U} = \BP^{R_U}$. So, it is enough to show that the algorithm in the previous section  outputs the correct answer with high probability.

{\bf If $x$  halts.} 
 We need to prove that for some  $k$ and for random $A$  the event
 \begin{equation}
\label{event}
A  \circ A y \in R_U \text{ and } A  \circ A y \circ 0^9 \notin R_U.    
\end{equation}
 happens with high probability.

Set $k= \C_V(y \cnd n) + 5$. Then, by Lemma~\ref{hash}, with  probability at least $\frac{9}{10}$  it holds that $$\C_V(A  \circ f) \ge nk + k - D.$$ Thus, by the definition of $U$ the event
$A  \circ A y  \in R_U$ holds.

We claim that $A  \circ A y \circ 0^9 \notin R_U$ with probability $1$.

 Note that the first $l$ bits of  $y$ is a string $x$ that belongs to $H$.  
 Consider the following $U$-description of type $3$: the string $0 \circ A \circ d''$, where $d''$ is the minimal $V$-description of $y$ conditional to $n$. By the definition of $U$ this is a description of $A \circ Ay \circ 0^9$. The length of the description is $$1 + k \cdot n + k-5 $$ that is less than $$|A \circ Ay \circ 0^9| = k \cdot n + k + 9.$$ Thus, we conclude that $A \circ A y \circ 0^9 \notin R_U$, proving correctness in this case.

{\bf If $x$ does not halt.} 

We claim that, in this case, for every $k$ and random $A$  the probability of the event~\eqref{event} is small. 

 More precisely, we show that this probability  is not greater than $\frac{1}{16}$. 

This implies that the algorithm works correctly with high probability.
 
 Indeed, the algorithm fails only if event~\eqref{event} holds for at least half of the matrices $A_1, \ldots, A_m$.
 Since matrices are chosen independently and $m$ is a sufficiently  large polynomial Hoeffding's inequality implies that this probability is exponentially small in $n$.

(In more detail,
define independent random variables $I_1, I_2, \ldots, I_m$, where  
\[
I_j =
\begin{cases}  
1, & \text{if event~\eqref{event} occurs for matrix } A_j, \\  
0, & \text{otherwise}.  
\end{cases}  
\]  

Since the expectation of each $I_j$ is at most $\frac{1}{16}$, it follows that  
\[
\mathbb{E} \left[ \sum_{i=1}^{m} I_i \right] \leq \frac{m}{16}.
\]  

By Hoeffding’s inequality, the probability that  
\[
\sum_{i=1}^{m} I_i \geq \frac{m}{2}
\]  
is exponentially small. ) 

Thus, the algorithm works correctly for all $k$ with high probability.

%(In more details, we consider $m$ independent random variables $I_1, I_2, \ldots, I_m$, where $I_j = 1$ if the event ~\eqref{event} happens for matrix $A_j$, and $I_j = 0$ otherwise. Then the expectation of every $I_j$ is not greater then $\frac{1}{16}$. Hence, the expectation $\sum_i I_i$ is at most $\frac{m}{16}$. Since $m$ is large enough polynomial the probability of the event $\sum_i I_i \ge \frac{m}{2}$ is exponentially small.)

%Hence, the algorithm will work correctly for all $k$  with high probability.

It remains to prove the claim.  

To analyze when event~\eqref{event} can happen, assume that for some $k$ and $A$, it does occur.  
By Lemma~\ref{rnr}, this implies the existence of some $y' \neq y$ of length $n$ such that  
\begin{itemize}
    \item $\C(y' \cnd n) \leq k - 5$, and  
    \item $A y' = A y$.  
\end{itemize}

(Since $y$ must also satisfy additional conditions, we only focus on these two properties.)  

For a fixed $k$, there are at most $2^{k-4}$ strings $y'$ such that $\C(y' \cnd n) \leq k-5$.  
Thus, by Fact~\ref{f}, for a random matrix $A$, the probability of $A y' = A y$ for some such $y'$ is at most $\frac{1}{16}$.

%First analyze  in what case this event can happen.
%Assume that for  some $k$ and $A$ the event~\eqref{event} happens. By Lemma~\ref{rnr} it means that there exists some $y' \not=y$ of length $n$ such that
%$\C(y' \cnd n) \le k-5$ and $A y' = Ay$. (String $y$ also must has another properties but we will need only the mentioned above.)

%Now fix some $k$. There are at most $2^{k-4}$ strings $y'$ such that $\C(y' \cnd n) \le k-5$. Hence,  by Fact~\ref{f} for random $A$ the probability of the event $A y' = Ay$ for some such $y'$ is not greater then $\frac{1}{16}$. 

\end{proof}

\section{Acknowledgments}
I would like to thank  Eric Allender for  the formulation of the question and to Bruno Bauwens for the useful discussions.

This work was funded by the European Union (ERC, HOFGA, 101041696). Views and opinions expressed are however those of the author(s) only and do not necessarily reflect those of the European Union or the European Research Council. Neither the European Union nor the granting authority can be held responsible for them.
It was also supported by FCT through the LASIGE Research Unit, ref.\ UIDB/00408/2025 and ref.\ UIDP/00408/2025, and by CMAFcIO, FCT Project UIDB/04561/2020, \url{https://doi.org/10.54499/UIDB/04561/2020}.

\end{document}